%%%%%%%%%%%%%%%%%%%%%%% file typeinst.tex %%%%%%%%%%%%%%%%%%%%%%%%%
%
% This is the LaTeX source for the instructions to authors using
% the LaTeX document class 'llncs.cls' for contributions to
% the Lecture Notes in Computer Sciences series.
% http://www.springer.com/lncs       Springer Heidelberg 2006/05/04
%
% It may be used as a template for your own input - copy it
% to a new file with a new name and use it as the basis
% for your article.
%
% NB: the document class 'llncs' has its own and detailed documentation, see
% ftp://ftp.springer.de/data/pubftp/pub/tex/latex/llncs/latex2e/
%
%%%%%%%%%%%%%%%%%%%%%%%%%%%%%%%%%%%%%%%%%%%%%%%%%%%%%%%%%%%%%%%%%%%

\documentclass[runningheads,a4paper]{llncs}

\usepackage{amssymb}
\setcounter{tocdepth}{3}
\usepackage{graphicx}
\usepackage[fleqn]{amsmath}
\usepackage{esvect}
\usepackage[]{algorithm2e}

\usepackage{url}
\urldef{\mailsa}\path|rupei.xu@utdallas.edu|
\urldef{\mailsb}\path|wshull@emory.edu|
\newcommand{\keywords}[1]{\par\addvspace\baselineskip
\noindent\keywordname\enspace\ignorespaces#1}

\begin{document}

\mainmatter  % start of an individual contribution

% first the title is needed
\title{Hedge Connectivity without Hedge Overlaps}

% a short form should be given in case it is too long for the running head
\titlerunning{Hedge Connectivity without Hedge Overlaps}

% the name(s) of the author(s) follow(s) next
%
% NB: Chinese authors should write their first names(s) in front of
% their surnames. This ensures that the names appear correctly in
% the running heads and the author index.
%
\author{ Rupei Xu \and Warren Shull}
\authorrunning{ Rupei Xu \and Warren Shull }
% (feature abused for this document to repeat the title also on left hand pages)

% the affiliations are given next; don't give your e-mail address
% unless you accept that it will be published
\institute{The University of Texas at Dallas\\
Emory University\\
\mailsa\\
\mailsb\\
}

%
% NB: a more complex sample for affiliations and the mapping to the
% corresponding authors can be found in the file "llncs.dem"
% (search for the string "\mainmatter" where a contribution starts).
% "llncs.dem" accompanies the document class "llncs.cls".
%

\toctitle{Hedge Connectivity without Hedge Overlaps}
\tocauthor{ Rupei Xu, Warren Shull}
\maketitle

\begin{abstract}

Connectivity is a central notion of graph theory and plays an important role in graph algorithm design and applications. With emerging new applications in networks, a new type of graph connectivity problem has been getting more attention--hedge connectivity. In this paper, we consider the model of hedge graphs without hedge overlaps, where edges are partitioned into subsets called hedges that fail together. The hedge connectivity of a graph is the minimum number of hedges whose removal disconnects the graph. This model is more general than the hypergraph, which brings new computational challenges. It has been a long open problem whether this problem is solvable in polynomial time. In this paper, we study the combinatorial properties of hedge graph connectivity without hedge overlaps, based on its extremal conditions as well as hedge contraction operations, which provide new insights into its algorithmic progress. 

\keywords{Hedge Graph, Graph Connectivity, Graph Contraction}
\end{abstract}

\section{Introduction}

Connectivity has been a central notion of graph theory since its birth in the 18th century and has been playing an important role in graph algorithm design and applications. With emerging real-world new applications in image segmentation, optical networking, network security, software-defined networking and virtual network embedding, a new type of graph connectivity problem has been getting more attention--hedge connectivity, where edges are partitioned into subsets called hedges that fail together. The hedge connectivity of a graph is the minimum number of hedges whose removal disconnects the graph. However, its mathematical studies are still very limited. This is the first paper to investigate the combinatorial properties of hedge graph connectivity without hedge overlaps, based on its extremal conditions as well as hedge contraction operations. 

In this paper, all original graphs considered are finite and simple, i.e., they have no self-loops nor multiple edges. But during operations, self-loops and multiple edges may appear. Graphs are also assumed connected, otherwise, the hedge connectivity is just simply zero. 

Given an undirected graph $G=(V, E, L)$, where $V=\{v_1, v_2, ..., v_n\}$ is the vertex set, $E=\{e_1, e_2, ..., e_m\}$ is the edge set and $L=\{\ell_1, \ell_2,..., \ell_{|L|}\}$is the label set. Each edge has one label from $L$, and the edge set with same label $L_i$ form an hedge $H_i.$ Let $span(H_i)$ represent the \emph{\textbf{span}} of hedge $H_i$, i.e., the number of its components. The \emph{\textbf{rank}} of a graph $rank(G)$ is the difference of the number of its vertices and its span: $rank(G)=|V(G)|-span(G).$ The \emph{\textbf{nullity}} of a graph $nullity$ is the difference of the number of its edges and its rank: $nullity(G)=|E(G)|-rank(G)=|E(G)|-|V(G)|+span(G).$

\begin{figure}
	\centering
	\includegraphics[width=0.8\linewidth]{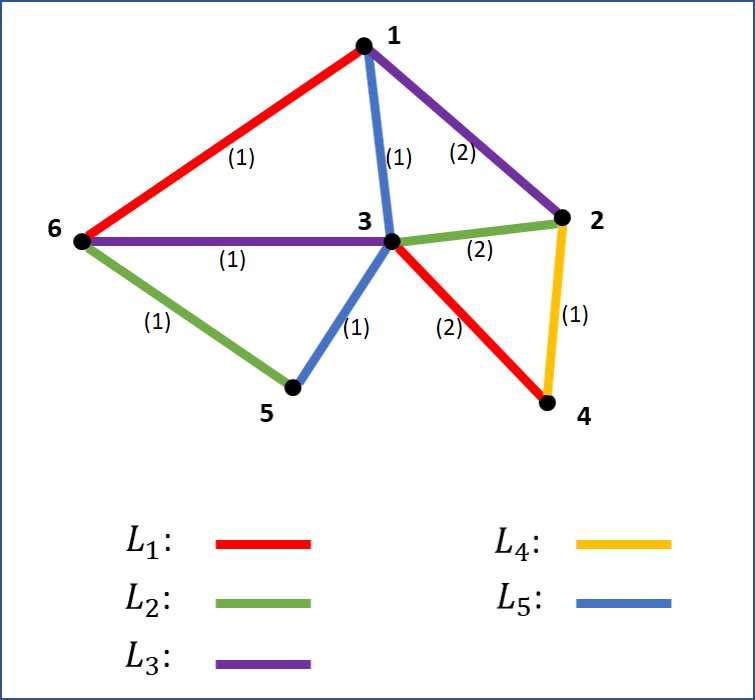}
	\caption{Hedge Graph: each edge has a label $L_j$ and component index (i).}
	\label{fig:hedgegraph}
\end{figure}

The \emph{\textbf{global hedge connectivity problem}} asks for finding the minimum number of hedges, whose edges removal  disconnects the graph. Such set of hedges is called \emph{\textbf{minimum global hedge cut}}. 

Unlike the graph edges adjacency, hedges may have several components. i.e, the span of a hedge is more than $1.$ The hedge component-wise adjacent relationship can be represented as a square matrix $\mathbf{M}$ of order $S,$ where $S$ is the number of the maximum span of all hedges: $S:=\max_{i=1}^{p}(span(H_i)).$ For hedge $H_r$ and hedge $H_t$,  $\mathbf{M}_{ij}=1$ if and only if the the $i$-th component of hedge $H_r$ and the $j$-th component of hedge $H_t$ intersect each other.  The hedge adjacency relationship can be represented as a three-dimensional tensor, with three dimensions as hedge index, component index, and vertex index. Since this paper mainly focuses on combinatorial aspects, its matrix and tensor properties are not further studied. 

Despite the $st$-hedge connectivity problem without hedge overlaps is NP-hard proved by Cai et al.\cite{zhang2011approximation}, the (global) hedge connectivity problem without hedge overlaps is polynomial-time solvable in several special cases, including graphs with bounded treewidth, planar graphs, and instances with bounded label frequency showed by Zhang \cite{zhang2014efficient}. It is also in P when the graph has bounded degree and when for each label and the subgraph induced by the label is connected by Coudert et al.\cite{coudert2007shared}, i.e., the $span(H_i)=1$, this problem is equivalent to hypergraph connectivity problem, which is also known polynomial-time solvable. Xu \cite{xu2018cuts} gave a randomized polynomial-time algorithm for hedgegraphs with constant span. Fox, Panigrahi and Zhang \cite{fox2019minimum} further improved the running time of randomized polynomial-time algorithm for hedgegraphs with constant span and hedgegraph-$k$-cut problem, where $k$ is a constant. The problem is fixed-parameter tractable (FPT) when parameterized by the number $k$ of labels for which the subgraph induced by the label is not connected by Coudert et al.\cite{coudert2016combinatorial}. It is also known hedge connectivity problem without hedge overlaps is quasi-polynomial time solvable by Mohsen, Karger and Panigrahi \cite{ghaffari2017random}. 

There are several other easy verifiable polynomial-time solvable cases: (1). If there is only one label, this graph is $1$- connected; (2). If there is one vertex has label degree $1,$ the hedge graph is $1$-connected; (3). If the number of labels is constant, it is polynomial-time solvable; (4). If there are $m$ labels, where $m$ is the number of edges, this problem is equivalent to the ordinary graph edge connectivity problem. Thus the open case of hedge connectivity without hedge overlaps is the following: when there exists a hedge with span no less than $2,$ all vertices have label degree no less than $2,$ the number of labels is no more than $m-1,$ but this number is not a constant. 

In the following sections, the relationship between hedge connectivity and label degree, hedge adjacency and hedge contraction operations are carefully investigated. 

\section{Label Degree}

Let $d_L(v)$ be the \emph{\textbf{label degree}} of vertex $v\in V$, which is the number of different labels on the edges incident with vertex $v$. Let $\delta_{L}(V)$ and $\Delta_{L}(V)$ be the \emph{\textbf{minimum label degree}} and \emph{\textbf{maximum label degree}} of graph $G$:

$$\delta_{L}(V):= \min_{v_i\in V(G)}\{d_L(v_i)\},$$ 
$$\Delta_{L}(V):= \max_{v_i \in V(G)}\{d_L(v_i)\}.$$

For each hedge $H_i$, let its \emph{\textbf{hedge total label degree}} $total~d_L(V(H_i))$ be the sum of \emph{\textbf{induced label degrees}} of all its vertices, $\delta_L(V(H_i))$ and $\Delta_L(V(H_i))$ be its \emph{\textbf{minimum hedge label degree}} and \emph{\textbf{maximum hedge label degree}}, which are the minimum and maximum value of induced label degrees of all its vertices. 

$$total~d_L(V(H_i)):=\sum_{v_j\in V(H_i)} d_L(v_j),$$ $$\delta_L(V(H_i)):=\min_{v_j\in V(H_i)} d_L(v_j),$$ $$\Delta_L(V(H_i)):=\max_{v_j\in V(H_i)} d_L(v_j).$$

It is obvious that 

$$\delta_{L}(V)= \min_{i} \delta_L(V(H_i)),$$ 
$$\Delta_{L}(V)= \max_{i}\Delta_L(V(H_i)).$$

\begin{theorem}\label{min_label_degree}
	The Global Hedge Connectivity (Minimum Global Hedge Cut) of a graph is at most the minimum label degree of the graph: $\lambda_H(G)\leq \delta_{L}(V).$
\end{theorem}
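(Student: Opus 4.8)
The plan is to mimic the classical argument that edge connectivity is bounded above by the minimum degree, adapted to the hedge setting. I would begin by selecting a vertex $v \in V(G)$ that attains the minimum label degree, so that $d_L(v) = \delta_{L}(V)$. Let $S$ denote the set of labels appearing on the edges incident with $v$; by the definition of label degree, $|S| = d_L(v) = \delta_{L}(V)$.

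The key observation is that every edge incident with $v$ carries some label in $S$. Therefore, if I remove all hedges whose labels lie in $S$ --- that is, the $|S|$ hedges $\{H_i : \ell_i \in S\}$ --- then in particular every edge incident with $v$ is deleted. Consequently $v$ becomes isolated in the resulting graph, which (since $G$ is connected with $|V(G)| \geq 2$) means the graph is disconnected. This exhibits a global hedge cut consisting of $|S| = \delta_{L}(V)$ hedges, and since $\lambda_H(G)$ is by definition the minimum size over all such cuts, we conclude $\lambda_H(G) \leq \delta_{L}(V)$.

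The main point to be careful about is that removing a hedge $H_i$ deletes \emph{every} edge bearing label $\ell_i$, not merely those incident with $v$; but this only removes additional edges elsewhere in the graph, which can never reconnect $v$, so it does no harm to the argument. The only genuine edge case is the degenerate single-vertex instance, which I would dispose of via the standing assumption that $G$ is connected with at least two vertices (the hedge connectivity being zero otherwise by convention). Notably, no appeal to hedge spans or to the absence of overlaps is needed here, so the bound holds uniformly across the hedge model.
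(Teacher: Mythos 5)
Your proof is correct and takes essentially the same approach as the paper: both arguments observe that the hedges incident with a vertex of minimum label degree form a global hedge cut of size $\delta_{L}(V)$, which bounds $\lambda_H(G)$ from above. The paper frames this as a contradiction against the minimality of the cut $L^*$, whereas you state it directly (and somewhat more carefully, noting that extra deleted edges elsewhere cannot reconnect the isolated vertex and handling the $|V(G)|\geq 2$ degenerate case), but the underlying idea is identical.
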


\begin{proof}
	
	Let $L^*$ be the minimum global hedge cut, if $|L^*|$ is greater than $\delta_{L}(V)$, the hedges adjacent to the vertex of minimum label degree can be seen as a hedge cut, as their removal can disconnect the graph, but their size is smaller than the $|L^*|$, which conflicts $L^*$ is the minimum global hedge cut.
	
\end{proof}

\begin{theorem} \label{max_label_degree} Relabel each hedge with a new label, such that if two hedges adjacent to each other, the new labels of them are different. The new labels form a new label set $L'=\{L'_1, L'_2,..., L'_q\}$, its size is at least the maximum label degree of the original graph: $|L'|\geq \Delta_{L}(V).$ 
\end{theorem}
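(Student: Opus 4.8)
The plan is to exhibit a family of hedges that are pairwise forced to receive distinct new labels, and whose size equals $\Delta_L(V)$. Since a valid relabeling assigns different new labels to any two adjacent hedges, producing a large pairwise-adjacent family of hedges immediately forces $|L'|$ to be at least as large as that family. In other words, the whole theorem reduces to the standard observation that the number of colors in a proper coloring is at least the size of any clique.

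First I would select a vertex $v^*$ attaining the maximum label degree, so that $d_L(v^*)=\Delta_L(V)$. By the definition of label degree, the edges incident to $v^*$ carry exactly $\Delta_L(V)$ distinct original labels. Each such label corresponds to a distinct hedge, and since $v^*$ is an endpoint of one of that hedge's edges, $v^*$ lies in the vertex set of that hedge. This produces a family $\mathcal{H}$ of $\Delta_L(V)$ distinct hedges, every one of which contains $v^*$.

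Next I would argue that the hedges in $\mathcal{H}$ are pairwise adjacent in the sense used by the relabeling. Any two of them share the vertex $v^*$; in particular $v^*$ lies in some component of each, so those two components intersect, which is precisely the adjacency condition encoded by $\mathbf{M}$. Thus $\mathcal{H}$ is a pairwise-adjacent family, and the relabeling constraint forbids any two of its members from sharing a new label. Consequently the $\Delta_L(V)$ hedges of $\mathcal{H}$ consume at least $\Delta_L(V)$ distinct new labels, so any valid relabeling satisfies $|L'|\geq \Delta_L(V)$.

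Since this is essentially the clique lower bound on chromatic number applied to the hedge-adjacency graph, I expect no serious difficulty. The only point deserving care is confirming that ``hedges sharing a common vertex are adjacent'' agrees with the paper's component-wise intersection definition via $\mathbf{M}$. Because the shared vertex $v^*$ necessarily belongs to some component of each hedge, that verification is immediate, and the clique bound then delivers the theorem.
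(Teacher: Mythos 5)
Your proposal is correct and takes essentially the same approach as the paper: both fix a vertex $v^*$ attaining $\Delta_L(V)$, observe that the $\Delta_L(V)$ hedges incident to it are pairwise adjacent because they share $v^*$, and conclude that they must receive distinct new labels. The paper phrases this as a pigeonhole contradiction while you phrase it as the clique lower bound for proper coloring, but these are the same argument.
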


\begin{proof}
	Assume $d_L(v)=\Delta_{L}(V),$ there are $\Delta_{L}(V)$ hedge edges connected to vertex $v.$ If $|L'|< \Delta_{L}(V),$ according to pigeonhole principle, there exists two hedges whose edges connected to vertex $v$ have the same new label, which contradicts with the relabel rule. 
	
\end{proof}

\begin{figure}
	\centering
	\includegraphics[width=0.6\linewidth]{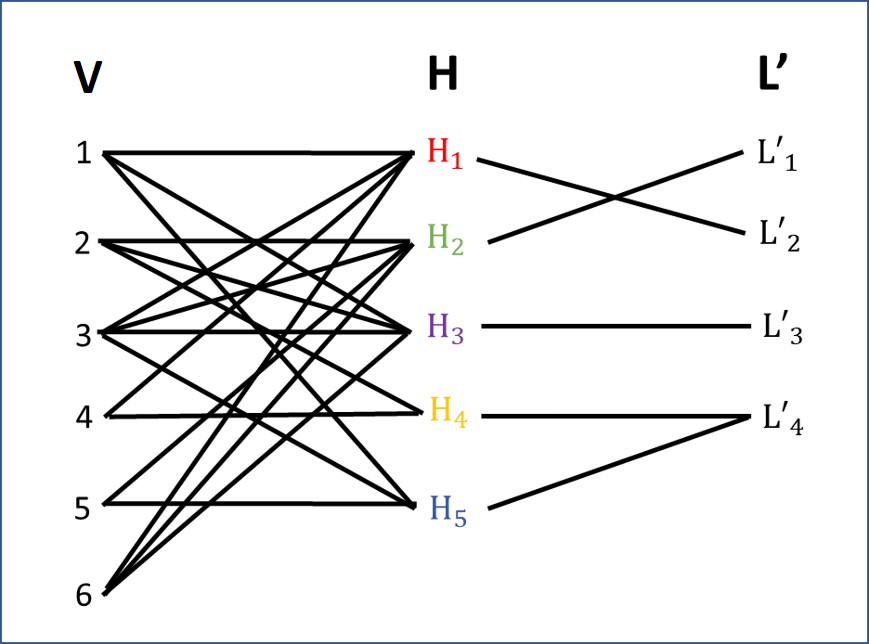}
	\caption{Hedge Relabeling}
	\label{fig:hedgerelabel}
\end{figure}

\section{Hedge Adjacency}

\begin{figure}
	\centering
	\includegraphics[width=0.8\linewidth]{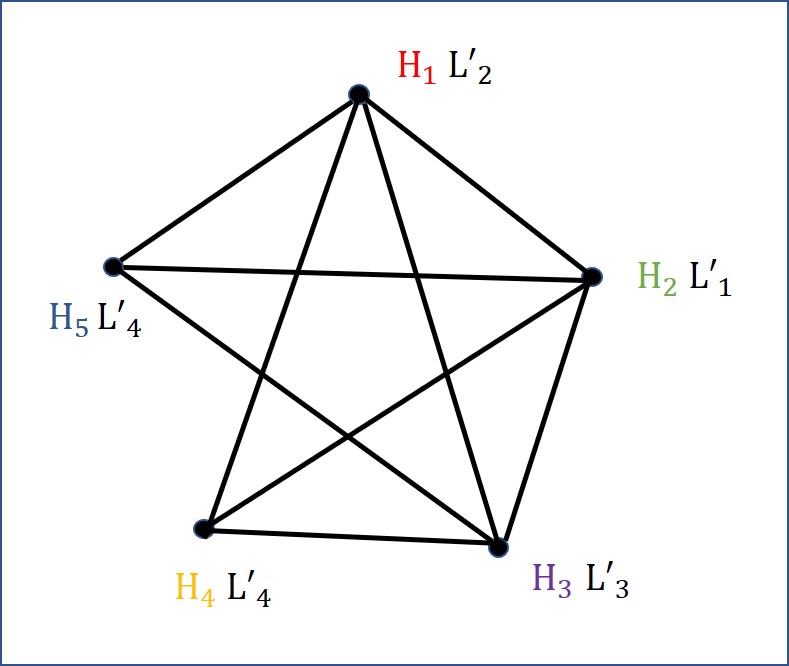}
	\caption{Hedge Adjacency with Relabeling}
	\label{fig:hedgeadcacency}
\end{figure}

\begin{figure}
	\centering
	\includegraphics[width=0.8\linewidth]{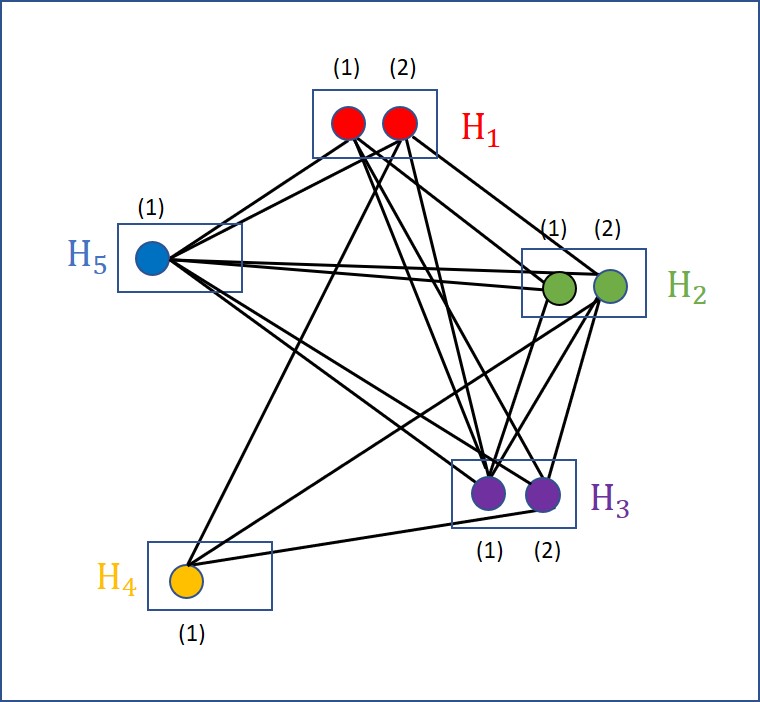}
	\caption{Hedge Component-wise Adjacency }
	\label{fig:hedgeadcacency2}
\end{figure}

Let $d_A(H_i)$ be the \emph{\textbf{hedge adjacency degree}} of hedge $H_i$, which is the number of other hedges adjacent to hedge $H_i$. 

\begin{theorem}
	$d_A(H_i)\leq total~d_L(V(H_i)).$
\end{theorem}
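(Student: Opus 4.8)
The plan is to prove this by an injection (double-counting) argument, exploiting the fact that distinct hedges carry distinct labels: since the edges are partitioned into hedges, each label names exactly one hedge. First I would unpack the notion of adjacency. By the component-wise definition encoded in the matrix $\mathbf{M}$, a hedge $H_j$ is adjacent to $H_i$ precisely when some component of $H_j$ meets some component of $H_i$, which is to say that $H_i$ and $H_j$ share at least one vertex. Hence for every hedge $H_j\neq H_i$ adjacent to $H_i$ I can select a witness vertex $v_j\in V(H_i)\cap V(H_j)$. Because $v_j\in V(H_j)$, it is an endpoint of some edge of $H_j$, so the label $\ell_j$ of $H_j$ is one of the distinct labels counted by $d_L(v_j)$.

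Next I would build the map $f(H_j)=(v_j,\ell_j)$ from the set of hedges adjacent to $H_i$ into the set of pairs $P=\{(v,\ell): v\in V(H_i)\text{ and }\ell\text{ occurs on an edge incident to }v\}$. The key point is that $f$ is injective: if $f(H_j)=f(H_k)$ then in particular $\ell_j=\ell_k$, and since the edge partition assigns a unique label to each hedge this forces $H_j=H_k$ (note that injectivity holds in the label coordinate alone, so it does not matter that several hedges might share the same witness vertex). Finally I would observe that $|P|$ equals $total~d_L(V(H_i))=\sum_{v\in V(H_i)}d_L(v)$ exactly, since for each vertex $v$ the label degree $d_L(v)$ counts precisely the distinct labels occurring at $v$. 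Injectivity of $f$ then yields $d_A(H_i)=|\{H_j:H_j\text{ adjacent to }H_i\}|\leq |P|=total~d_L(V(H_i))$, as claimed.

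I expect the main point to watch — rather than a genuine obstacle — to be pinning down the adjacency relation correctly, namely confirming that component-wise intersection is equivalent to sharing a vertex and that a witness vertex always exists; once this is settled the injection is immediate and the rest is bookkeeping. It is worth recording that the bound is not tight: every witness vertex $v_j$ lies in $V(H_i)$ and therefore also carries the label $\ell_i$ of $H_i$ itself, and these $|V(H_i)|$ occurrences of $\ell_i$ are never in the image of $f$. The same argument thus gives the sharper inequality $d_A(H_i)\leq total~d_L(V(H_i))-|V(H_i)|$.
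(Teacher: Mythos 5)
Your proof is correct and takes essentially the same approach as the paper: the paper bounds $d_A(H_i)$ by observing that the adjacent hedges form the union, over vertices $v\in V(H_i)$, of the hedges incident at $v$, and then applies the union bound $|\bigcup_v A_v|\leq\sum_v |A_v| \leq \sum_{v\in V(H_i)} d_L(v)$, which is precisely your witness-and-injection argument written as a one-line estimate. Your additional remark that the bound sharpens to $d_A(H_i)\leq total~d_L(V(H_i))-|V(H_i)|$ (since the pair $(v,\ell_i)$ is never a witness) is valid and simply makes explicit the slack already present in the paper's union bound.
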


\begin{proof}
	Accoring to the definition, $d_A(H_i)$ is the number of other hedges that hedge $H_i$ adjacent to, which is the union of adjacent hedges of all vertices in $H_i,$ and its size is no more than the the total number of adjacent hedges of all vertices in $H_i.$
	
\end{proof}

\begin{theorem}
	$\max_i~d_A(H_i) \geq \Delta_{L}(V).$ 
\end{theorem}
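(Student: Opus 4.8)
The plan is to reduce the statement to a clique argument in the hedge-adjacency graph, exactly paralleling the pigeonhole reasoning used in Theorem~\ref{max_label_degree}. First I would select a vertex $v^\ast$ realizing the maximum label degree, so that $d_L(v^\ast)=\Delta_L(V)=:k$. By definition of the label degree, the edges incident with $v^\ast$ carry exactly $k$ distinct labels, hence $v^\ast$ lies on $k$ distinct hedges $H_{i_1},\dots,H_{i_k}$, each contributing at least one edge at $v^\ast$.

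The key structural observation is that these $k$ hedges are pairwise adjacent: any two of them share the common vertex $v^\ast$, so some component of one meets some component of the other, which is precisely the adjacency condition recorded by the matrix $\mathbf{M}$. Thus $H_{i_1},\dots,H_{i_k}$ form a clique of size $k$ in the hedge-adjacency graph. Consequently each of them is adjacent to the remaining $k-1$, giving $d_A(H_{i_j})\ge k-1$ for every $j$, and therefore $\max_i d_A(H_i)\ge \Delta_L(V)-1$ with essentially no work.

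The delicate part, and the step I expect to be the main obstacle, is upgrading this clean bound from $\Delta_L(V)-1$ to the claimed $\Delta_L(V)$, i.e.\ exhibiting one hedge of the clique that is adjacent to a \emph{further} hedge lying outside $\{H_{i_1},\dots,H_{i_k}\}$. My approach would be to trace an edge $v^\ast u$ belonging to some $H_{i_j}$ out to its far endpoint $u$ and to force a new incident label there, invoking the standing hypothesis of the open case that every vertex has label degree at least $2$. If the extra label at $u$ can be shown to differ from all labels present at $v^\ast$, the corresponding hedge supplies the missing adjacency.

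I would stress that this last step is genuinely subtle and is where any correct argument must do real work: one has to exclude ``closed'' configurations in which every edge touching the hedges through $v^\ast$ already carries one of the $k$ labels at $v^\ast$ (a star is the simplest such example), for in those cases the clique is saturated and no hedge can exceed adjacency degree $k-1$. Ruling these out should rely on connectivity together with the label-degree-at-least-two assumption, and pinning down precisely when this can be done, and hence whether the strict bound $\Delta_L(V)$ rather than $\Delta_L(V)-1$ is actually attainable, is the crux of the proof.
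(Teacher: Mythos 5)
Your proposal stops short of proving the stated theorem, but your instincts about where the difficulty lies are exactly right---and in fact sharper than the paper's own argument. The clique step you give is sound: the $k=\Delta_L(V)$ hedges through a vertex $v^\ast$ of maximum label degree are pairwise adjacent, so each has adjacency degree at least $k-1$, giving $\max_i d_A(H_i)\geq \Delta_L(V)-1$. However, the step you flag as ``the crux''---upgrading $k-1$ to $k$---cannot be carried out in general, because the ``closed'' configuration you describe is a genuine counterexample to the theorem itself, not merely an obstacle to one proof strategy. Take a star: a center $v$ joined to leaves $u_1,\dots,u_k$ by $k$ edges carrying $k$ distinct labels. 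Then $\Delta_L(V)=d_L(v)=k$, each hedge is a single edge, and each hedge is adjacent to exactly the other $k-1$ hedges, so $\max_i d_A(H_i)=k-1<\Delta_L(V)$. Nothing in the paper's standing hypotheses (finite, simple, connected) excludes this graph; the label-degree-at-least-two condition appears only in the description of the open algorithmic case, not as a blanket assumption, so it cannot be invoked to rescue the statement.

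For comparison, the paper's proof commits precisely the off-by-one error that your clique argument avoids. In its second case it asserts $d_L(v)\leq d_A(H_j)$ for $v\in V(H_j)$, but $H_j$ is itself one of the $d_L(v)$ hedges incident with $v$, and $d_A(H_j)$ by definition counts only \emph{other} hedges, so the correct conclusion is $d_L(v)-1\leq d_A(H_j)$; the same slip occurs in its first case, where the hedges adjacent to $H_k$ through $v$ number $\Delta_L(V)-1$, not $\Delta_L(V)$. The statement that is actually provable---and that both your argument and a corrected version of the paper's argument yield---is $\max_i d_A(H_i)\geq \Delta_L(V)-1$, with equality attained by the star. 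You should not spend further effort trying to close the gap you identified; instead, record the counterexample and the corrected bound.
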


\begin{proof}
	Assume vertex $v$ has the maximum label degree: $d_L(v)=\Delta_{L}(V),$ and $d_A(H_k)= \max_i~d_A(H_i).$ If $v\in V(H_k),$ $d_A(H_k)$ is the union of adjacent hedges of all vertices in $H_k,$ thus $\max_i~d_A(H_i) \geq \Delta_{L}(V).$ Otherwise, if $v\in V(H_j),$ where $d_A(H_j) \leq d_A(H_k),$ because $d_L(v)\leq d_A(H_j),$ one can get $d_L(v)=\Delta_{L}(V)\leq d_A(H_j) \leq d_A(H_k)= \max_i~d_A(H_i).$
	
\end{proof}

\begin{theorem} \label{max_label_degree} Relabel each hedge with a new label, such that if two hedges adjacent to each other, the new labels of them are different. The new labels form a new label set $L'=\{L'_1, L'_2,..., L'_q\}$, its size is at least the maximum adjacency degree of the original graph.  $|L'|\geq \max_i~d_A(H_i).$ 
\end{theorem}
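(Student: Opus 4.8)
The plan is to recast the relabeling as a proper vertex coloring of the \emph{hedge adjacency graph} $\mathcal{A}$, whose vertices are the hedges $H_1,\dots,H_p$ and whose edges join two hedges exactly when they are adjacent, i.e. some component of one intersects some component of the other. Under this dictionary a valid relabeling is precisely a proper coloring of $\mathcal{A}$, the set $L'$ is the set of colors used, and $\max_i d_A(H_i)$ is nothing but the maximum degree $\Delta(\mathcal{A})$ of this graph. So the statement asks that every proper coloring of $\mathcal{A}$ use at least $\Delta(\mathcal{A})$ colors, and the first step is to phrase the target inequality $|L'|\ge \max_i d_A(H_i)$ in exactly these terms.

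I would then try to imitate the argument already used for $|L'|\ge \Delta_L(V)$. Fix the hedge $H_k$ with $d_A(H_k)=\max_i d_A(H_i)$ and list its neighbors $H_{j_1},\dots,H_{j_{d_A(H_k)}}$. By the relabeling rule each of these carries a label different from that of $H_k$. To reach the bound $|L'|\ge d_A(H_k)$ I would want to show that the neighbors themselves are forced onto $d_A(H_k)$ pairwise distinct labels; the cleanest route is to exhibit a clique of size $d_A(H_k)$ among them, since a clique of size $s$ in $\mathcal{A}$ immediately forces $s$ distinct labels by the same pigeonhole step invoked in the $\Delta_L(V)$ case.

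The main obstacle is exactly this clique step. In the $\Delta_L(V)$ argument the relevant hedges all pass through one common vertex $v$, so they pairwise intersect at $v$ and automatically form a clique, after which pigeonhole applies verbatim. Here, however, the neighbors of $H_k$ may attach to $H_k$ at \emph{different} components and need not intersect one another, so the neighborhood of $H_k$ is not a clique in general; and maximum degree alone does not bound the number of colors, as a star $K_{1,n}$ has maximum degree $n$ yet is properly $2$-colorable. To close this gap I expect to need an additional structural assumption on the extremal hedge, for instance that all of its neighbors meet it inside a single component (so that they share a common vertex and the earlier pigeonhole argument transfers directly), or more globally that $\mathcal{A}$ is locally complete around a maximum-degree vertex. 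Isolating the weakest hypothesis under which the $d_A(H_k)$ neighbors are genuinely forced to use $d_A(H_k)$ distinct labels is the crux of the proof, and is where I would concentrate the effort.
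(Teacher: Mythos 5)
Your refusal to make the pigeonhole step is exactly right, and it is worth knowing that the paper's own proof makes precisely the step you decline: it fixes $H_k$ with $d_A(H_k)=\max_i d_A(H_i)$, assumes $|L'|<\max_i d_A(H_i)$, and concludes by pigeonhole that two hedges adjacent to $H_k$ share a new label, ``which contradicts with the relabel rule.'' But the relabel rule only forbids equal labels on hedges adjacent \emph{to each other}; two neighbors of $H_k$ need not be adjacent to one another, so their sharing a label contradicts nothing. The pigeonhole argument is sound in the earlier theorem bounding $|L'|$ by $\Delta_L(V)$ precisely because all hedges through a common vertex $v$ pairwise intersect at $v$, hence pairwise conflict; the paper transplants that argument to a setting where this clique structure is absent, and that transplant is the entire content of its proof.

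Moreover, your $K_{1,n}$ observation is not merely a gap in the proof but a counterexample to the statement itself. Take $H_0$ to be a path $v_1v_2\cdots v_n$ whose edges all carry one label, and for each $i$ let $H_i$ be a single edge $v_iu_i$ with its own label, the vertices $u_i$ being distinct and the hedges $H_1,\dots,H_n$ pairwise disjoint. The hedge adjacency graph is the star $K_{1,n}$, so $\max_i d_A(H_i)=n$, yet assigning $H_0$ one new label and every $H_i$ a second new label is a valid relabeling with $|L'|=2<n$. So the theorem is false as stated, and no proof can exist without an additional hypothesis of the kind you isolate (e.g., all neighbors of a maximum-adjacency-degree hedge meeting it at a common vertex). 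The paper's subsequent appeal to ``Vizing's Theorem'' rests on the same confusion: the lower bound of $\Delta$ in edge coloring holds because edges incident to a common vertex pairwise conflict in the line graph, which is exactly the local clique property that fails for hedge adjacency.
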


\begin{proof}
	Draw the hedge adjacency graph $G_A(V_A, E_A)$ such that each vertex in $V_A$ is a hedge of original hedge graph $G,$ two vertices in $V_A$ are connected by an edge if and only if the two corresponding two hedges in the original hedge graph are adjacent to each other. This procedure can be done in polynomial-time. 
	
	Assume $|L'|< \max_i~d_A(H_i),$ and $d_A(H_k)=\max_i~d_A(H_i).$ According to pigeonhole principle, among the hedges adjacent to $H_k,$ there are two hedges have the same label, which contradicts with the relabel rule.
	
\end{proof}

\begin{corollary}
	$\lambda_H(G)\leq\delta_{L}(V)\}\leq\Delta_{L}(V)\}\leq \max_i~d_A(H_i)\leq |L'| .$
\end{corollary}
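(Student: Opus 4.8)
The plan is to prove the corollary as a simple concatenation of four inequalities, each supplied by a result established earlier in this section. First I would recall Theorem~\ref{min_label_degree}, which already gives $\lambda_H(G) \le \delta_{L}(V)$; this handles the leftmost link and requires no further argument.

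For the second link, $\delta_{L}(V) \le \Delta_{L}(V)$, I would appeal directly to the definitions: $\delta_{L}(V)$ is the minimum of $d_L(v_i)$ taken over all $v_i \in V(G)$, while $\Delta_{L}(V)$ is the maximum of the same quantity over the same (nonempty) vertex set. Since the minimum of a nonempty finite set of nonnegative integers never exceeds its maximum, the inequality is immediate.

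The third link, $\Delta_{L}(V) \le \max_i~d_A(H_i)$, is exactly the content of the hedge-adjacency theorem proved above, whose conclusion reads $\max_i~d_A(H_i) \ge \Delta_{L}(V)$; I would simply invoke it. The final link, $\max_i~d_A(H_i) \le |L'|$, is the conclusion of the relabeling theorem: under a proper relabeling in which any two adjacent hedges receive distinct new labels, the pigeonhole argument applied at a hedge $H_k$ achieving the maximum adjacency degree forces $|L'|$ to be at least $\max_i~d_A(H_i)$.

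Stringing these four established facts together by transitivity of $\le$ yields the entire chain. I do not anticipate a genuine obstacle, since each inequality has already been certified separately; the only point demanding care is bookkeeping, namely confirming that the quantity on the right-hand side of each link is literally identical to the quantity on the left-hand side of the next — in particular that the expression $\max_i~d_A(H_i)$ in the third and fourth links refers to the same maximization over all hedges — so that the concatenation contains no hidden gap.
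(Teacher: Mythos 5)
Your proposal is correct and matches the paper's intent exactly: the corollary is stated without proof precisely because it follows by chaining Theorem~\ref{min_label_degree}, the trivial inequality $\delta_{L}(V)\leq\Delta_{L}(V)$ from the definitions, the theorem $\max_i~d_A(H_i)\geq\Delta_{L}(V)$, and the relabeling theorem $|L'|\geq\max_i~d_A(H_i)$, which is what you do. No gap; your bookkeeping remark about the two occurrences of $\max_i~d_A(H_i)$ being the same maximization is the only point of care, and it holds.
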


\begin{lemma} \textbf{(Vizing's Theorem)} \label{vizing}
	The number of labels needed to relabel the hedge graph, such that adjacent hedges in the original graph now have different labels, is either $\max_i~d_A(H_i)$ or $\max_i~d_A(H_i)+1.$
\end{lemma}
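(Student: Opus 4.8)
The quantity in question is the minimum number of labels in any relabeling of the hedges that gives distinct labels to every pair of hedges adjacent in $G$. Phrased through the hedge adjacency graph $G_A(V_A,E_A)$ built in the proof of Theorem~\ref{max_label_degree}, such a relabeling is exactly a proper vertex coloring of $G_A$ (each hedge is a vertex, and adjacent hedges must differ), so the minimum number of labels is the chromatic number $\chi(G_A)$. Since $\max_i d_A(H_i)$ is precisely the maximum vertex degree of $G_A$, the lemma amounts to the two-sided bound $\max_i d_A(H_i) \le \chi(G_A) \le \max_i d_A(H_i)+1$, and the plan is to prove each inequality separately.

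For the upper bound I would use a greedy (sequential) coloring of the vertices of $G_A$. Fix any order of the hedges and process them one at a time, assigning to each the smallest label not yet used on one of its already-labeled neighbors in $G_A$. Because every hedge has at most $\max_i d_A(H_i)$ neighbors, at most $\max_i d_A(H_i)$ labels are forbidden at any step, so a label from a palette of size $\max_i d_A(H_i)+1$ is always free. This yields a valid relabeling with at most $\max_i d_A(H_i)+1$ labels, and it runs in polynomial time once $G_A$ has been constructed.

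For the lower bound I would invoke the previously established hedge-adjacency version of Theorem~\ref{max_label_degree}, which states that any valid relabeling $L'$ satisfies $|L'| \ge \max_i d_A(H_i)$; minimizing over all valid relabelings then gives $\chi(G_A) \ge \max_i d_A(H_i)$. Combining this with the greedy upper bound pins $\chi(G_A)$ to one of the two values $\max_i d_A(H_i)$ or $\max_i d_A(H_i)+1$, which is exactly the assertion.

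The step I expect to be the main obstacle is the lower bound, and it deserves the most scrutiny. For an arbitrary graph the inequality $\chi(G_A) \ge \max_i d_A(H_i)$ is false: a star with $t$ leaves has maximum degree $t$ yet chromatic number $2$, because a proper vertex coloring only forces each hedge to differ from its neighbors, not the neighbors to differ from one another. I would therefore re-examine the pigeonhole argument behind the cited theorem and pin down exactly which structural feature of hedge adjacency (as opposed to a general $G_A$) makes it go through; absent such a feature, the conclusion I can still guarantee unconditionally is the greedy upper bound $\chi(G_A) \le \max_i d_A(H_i)+1$ together with the clique bound $\chi(G_A) \ge \omega(G_A)$, and the two-sided statement would need to be qualified accordingly.
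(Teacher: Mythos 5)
Your reduction to vertex coloring of the hedge adjacency graph $G_A$ is exactly right, your greedy argument for $\chi(G_A)\le \max_i d_A(H_i)+1$ is sound, and, most importantly, your suspicion about the lower bound is vindicated. The paper offers no actual proof of this lemma: it simply names it ``Vizing's Theorem,'' but Vizing's theorem concerns \emph{edge} coloring and bounds the chromatic index by the maximum \emph{vertex} degree of the underlying graph, not by the maximum degree of the line graph. Relabeling hedges so that adjacent hedges differ is a vertex coloring of $G_A$, and $\max_i d_A(H_i)$ is the maximum degree of $G_A$ (the line-graph analog), so the Vizing-type conclusion does not transfer. The earlier theorem the paper would need, $|L'|\ge \max_i d_A(H_i)$, rests on a pigeonhole argument that is invalid for precisely the reason you give: two hedges adjacent to $H_k$ receiving the same label is a contradiction only if those two hedges are adjacent \emph{to each other}, which hedge adjacency does not guarantee.

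Your star counterexample can be realized concretely as a hedge graph: let $H_0$ be a hedge whose edges form a path $u_1u_2\cdots u_k$, and attach at each $u_j$ a pendant edge carrying its own label $H_j$. Then $d_A(H_0)=k$ while the hedges $H_1,\dots,H_k$ are pairwise non-adjacent, so two labels suffice and the minimum lies far below $\max_i d_A(H_i)$. The lemma fails even when every hedge is a single edge: in a double star whose two adjacent centers each have three pendant leaves, K\H{o}nig's edge-coloring theorem gives a proper relabeling with $4$ labels, yet $\max_i d_A(H_i)=6$, so the minimum is neither $6$ nor $7$. Hence the only unconditional statements are the ones you prove: $\omega(G_A)\le\chi(G_A)\le \max_i d_A(H_i)+1$ (and Brooks' theorem sharpens the upper bound to $\max_i d_A(H_i)$ unless $G_A$ is complete or an odd cycle); the valid lower bound in this setting is $\chi(G_A)\ge \Delta_L(V)$, since the hedges through a common vertex form a clique in $G_A$. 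Your qualified conclusion, not the lemma as stated, is the correct one, and this flaw propagates to the later theorem on hedge contraction that counts $|L'|-2$ common neighbors of $H_i$ and $H_j$ using the same reasoning.
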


\section{Hedge Contraction}

For hedge $H_i,$ $G\diagup H_i$ denotes the graph obtained from $G$ by contracting each edge $e\in H_i$ into a single vertex and deleting resulting loops with label of $H_i.$ Note that, other loops with different labels from $H_i$ caused during contraction of $H_i,$ must be kept, unless they can be cleaned up in the following clean-up process: (1) merge edges with same labels between two vertices; (2) merge loops with sames labels of edges on each vertex. 

In contraction of $H_i$, it is easy to verify that, in graph $G,$ the number of edges is reduced by $|H_i|$, i.e., the number of edges in $H_i$, and the number of vertices is reduced by $(|(V(H_i)|-span(H_i)),$ i.e., the rank of $H_i.$ The rank of $H_i$ is reduced to be zero and the nullity of $H_i$ is reduced by one; the rank of $G$ is reduced by the rank of $H_i$ and the nullity of $G$ is reduced by the nullity of $H_i.$

\begin{theorem}\label{ranksum}
	$$rank(G)=\sum_{i}^{|L|} rank(H_i).$$
\end{theorem}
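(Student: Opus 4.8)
The plan is to prove the identity by iterated hedge contraction, using the facts recorded in the discussion preceding the statement: contracting a hedge $H_i$ reduces the number of vertices of $G$ by $rank(H_i)=|V(H_i)|-span(H_i)$ and, correspondingly, lowers $rank(G)$ by $rank(H_i)$. First I would fix an arbitrary ordering $H_1,H_2,\ldots,H_{|L|}$ of the hedges and form the sequence of graphs $G_0=G$ and $G_j=G_{j-1}\diagup H_j$. Since the hedges partition $E$, every edge is eventually contracted, and because $G$ is connected this collapses $G$ to a single vertex carrying only loops; after the clean-up process the final graph $G_{|L|}$ is a single point, so $span(G_{|L|})=1$ and hence $rank(G_{|L|})=0$.

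Next I would telescope. Writing $\rho_j$ for the amount by which $rank$ drops when $H_j$ is contracted inside $G_{j-1}$, the contraction fact gives $rank(G_{j-1})-rank(G_j)=\rho_j$, and summing over $j$ yields
\[
rank(G)=rank(G_0)-rank(G_{|L|})=\sum_{j=1}^{|L|}\rho_j .
\]
It therefore remains only to identify each per-step drop $\rho_j$ with the original hedge rank $rank(H_j)$, which would finish the proof.

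This identification is exactly where I expect the main obstacle to lie. A priori $\rho_j$ is the rank of $H_j$ measured inside the already-contracted graph $G_{j-1}$, not inside $G$: the earlier contractions $H_1,\ldots,H_{j-1}$ may have merged vertices of $V(H_j)$ and thereby altered both $|V(H_j)|$ and $span(H_j)$. The crux of the argument is thus to show that the rank of each hedge is invariant under contraction of the other hedges, equivalently that $rank$ is \emph{additive} (not merely subadditive) over the partition $E=H_1\cup\cdots\cup H_{|L|}$. I would attack this by tracking how a single contraction acts on a spanning forest of a fixed hedge $H_j$, arguing that contracting an edge outside $H_j$ either leaves $|V(H_j)|-span(H_j)$ unchanged, or lowers $|V(H_j)|$ and $span(H_j)$ by the same amount. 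Making this compensation precise, and pinning down whatever structural property of the hedges (for instance, that no cycle of $G$ mixes edges of different hedges) actually guarantees it, is the delicate step, and it is where I would concentrate the bulk of the work; without it the telescoping argument only delivers the inequality $rank(G)\le\sum_i rank(H_i)$.
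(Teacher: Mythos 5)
Your plan is, in outline, exactly the paper's own proof: the paper contracts the hedges one after another, observes that the final graph is a single vertex (so the vertex count drops by $n-1$ in total), asserts that contracting $H_i$ reduces the number of vertices by $rank(H_i)=|V(H_i)|-span(H_i)$, and sums. The one place you differ is that you explicitly refuse to take that assertion for granted: you point out that the per-step drop $\rho_j$ is the rank of $H_j$ measured inside the already-contracted graph $G_{j-1}$, which need not equal $rank(H_j)$ computed in $G$, and you defer the identification $\rho_j=rank(H_j)$ as the remaining ``bulk of the work.'' Since that step is never carried out, your proposal is incomplete as a proof; but the step you are missing is precisely the step the paper silently assumes, so you have located the exact weak point of the published argument rather than reproduced its conclusion.

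Moreover, the gap cannot be closed, because the identity is false under the paper's stated hypotheses (simple connected graph, edges partitioned into hedges). Let $G$ be a triangle on $\{a,b,c\}$ with $H_1=\{ab,bc\}$ and $H_2=\{ac\}$. Then $rank(H_1)=3-1=2$ and $rank(H_2)=2-1=1$, so $\sum_i rank(H_i)=3$, whereas $rank(G)=3-1=2$. Running your contraction sequence: collapsing $H_1$ merges all three vertices into one, after which $H_2$ has become a loop, and contracting it drops the vertex count by $0$, not by $rank(H_2)=1$ --- exactly the failure mode you anticipated, triggered here by the cycle $abc$ using edges of both hedges. Thus the telescoping argument proves only the subadditive inequality $rank(G)\le\sum_i rank(H_i)$, as you suspected, and equality genuinely requires an extra structural hypothesis of the kind you floated (for instance, that no cycle of $G$ mixes edges of distinct hedges, equivalently that spanning forests of the hedges assemble into a spanning tree of $G$), which the paper nowhere assumes. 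Note that the same triangle also refutes Theorem~\ref{nullsum}, whose proof is derived from Theorem~\ref{ranksum}.
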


\begin{proof}
	According to the definition, the rank of graph $G$ is the difference of its number of vertices and the number of its components: $rank(G)=n-1.$ In original graph $G,$ the number of vertices is $n,$ after contacting all hedges, there is only one single vertex left, the number of vertices is reduced by $(n-1)$ in total.  In contraction of $H_i,$ the number of vertices of graph $G$ is reduced by the rank of $H_i,$ i.e., $|V(H_i)|-span(H_i).$ Thus, $$\sum_{i}^{|L|} rank(H_i)=\sum_{i}^{|L|}(|V(H_i)|-span(H_i))=n-1=rank(G).$$
\end{proof}

\begin{theorem}\label{nullsum}
	$$nullity(G)=\sum_{i}^{|L|} nullity(H_i).$$
\end{theorem}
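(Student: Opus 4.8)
The plan is to derive this identity directly from Theorem~\ref{ranksum} together with the observation that the edge set is partitioned by the labels. By definition $nullity(H_i) = |E(H_i)| - rank(H_i)$ for each hedge and $nullity(G) = |E(G)| - rank(G)$ for the whole graph, so the entire statement reduces to an accounting of edges and ranks. This makes the result essentially a corollary of the previous theorem, and I would present it as such rather than redoing a contraction argument from scratch.

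First I would sum the hedge nullities and split the sum into its two contributions,
$$\sum_{i}^{|L|} nullity(H_i) = \sum_{i}^{|L|} |E(H_i)| - \sum_{i}^{|L|} rank(H_i).$$
The second sum is handled immediately by Theorem~\ref{ranksum}, which gives $\sum_{i}^{|L|} rank(H_i) = rank(G)$. For the first sum I would invoke the defining labeling convention: each edge of $G$ carries exactly one label, so the hedges $H_1,\dots,H_{|L|}$ partition $E(G)$, whence $\sum_{i}^{|L|} |E(H_i)| = |E(G)|$ with no edge counted twice and none omitted. Substituting both facts yields $\sum_{i}^{|L|} nullity(H_i) = |E(G)| - rank(G) = nullity(G)$, as desired.

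The step requiring the most care, modest as it is, is the edge-partition claim: I must confirm that the hedges genuinely tile the edge set, which follows from the labeling convention but should be stated explicitly, since any double-counting or omission would break the identity. (By contrast, the vertex sets $V(H_i)$ generally do overlap, so it is essential that the additivity comes through the edges, not the vertices.) An alternative route mirrors the contraction argument of Theorem~\ref{ranksum}: contracting $H_i$ reduces $nullity(G)$ by $nullity(H_i)$, and after all hedges are contracted $G$ collapses to a single vertex of nullity zero, so the accumulated reductions sum to $nullity(G)$. However, this version requires tracking how the spans of the remaining hedges change under successive contractions, so I would prefer the direct algebraic computation above.
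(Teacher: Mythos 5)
Your proof is correct and matches the paper's own argument: both derive the identity from Theorem~\ref{ranksum} together with the fact that the hedges partition $E(G)$, so that $\sum_{i}^{|L|}|E(H_i)|=m$ and the sum of nullities collapses to $m-rank(G)=nullity(G)$. Your only addition is making the edge-partition step explicit, which the paper leaves implicit; this is a fair clarification but not a different route.
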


\begin{proof}
	According to the definition, the nullity of graph $G$ is the difference of the number of its edges and its rank: $nullity(G)=m-n+1.$ From Theorem \ref{ranksum}, $\sum_{i}^{|L|} nullity(H_i)= \sum_{i}^{|L|}|H_i|-\sum_{i}^{|L|} rank(H_i)=m-n+1=nullity(G).$
	
\end{proof}

\begin{theorem} \label{vd}
	$$\sum_{i=1}^{|L|}|V(H_i)|=\sum_{j=1}^{n}d_L^G(v_i).$$
\end{theorem}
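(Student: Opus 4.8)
The plan is to prove this identity by double counting the incidences between vertices and hedges. First I would introduce the incidence set
$$\mathcal{P} := \{(v_j, H_i) : v_j \in V(H_i)\},$$
consisting of all pairs in which vertex $v_j$ is an endpoint of at least one edge carrying the label of hedge $H_i$. The strategy is then to evaluate $|\mathcal{P}|$ in two different orders of summation and to equate the two results; each side of the claimed equation will arise as one of these evaluations.

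Counting by hedge first: for a fixed hedge $H_i$, the number of vertices $v_j$ with $(v_j, H_i) \in \mathcal{P}$ is, by the definition of $V(H_i)$, exactly $|V(H_i)|$, the number of vertices spanned by $H_i$. Summing over all $|L|$ hedges yields $|\mathcal{P}| = \sum_{i=1}^{|L|} |V(H_i)|$, which is the left-hand side of the theorem.

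Counting by vertex instead: for a fixed vertex $v_j$, I would argue that the number of hedges $H_i$ with $(v_j, H_i) \in \mathcal{P}$ equals the number of distinct labels appearing on the edges incident with $v_j$. This is the step that needs the most care, and it is where the no-overlap hypothesis enters: since $L$ partitions $E$, every edge incident with $v_j$ carries exactly one label, and a hedge $H_i$ satisfies $v_j \in V(H_i)$ precisely when some edge incident with $v_j$ carries the label $\ell_i$. Hence the hedges touching $v_j$ are in bijection with the distinct labels on the edges at $v_j$, whose count is $d_L^G(v_j)$ by the definition of label degree. Summing over all $n$ vertices gives $|\mathcal{P}| = \sum_{j=1}^{n} d_L^G(v_j)$, the right-hand side.

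Equating the two evaluations of $|\mathcal{P}|$ completes the argument. I do not expect a genuine obstacle here; the only subtle point is the per-vertex count in the previous paragraph, namely confirming that each label contributes exactly one hedge to a given vertex's tally, which reduces to the fact that hedges are the label classes of a partition of $E$ (so there are no hedge overlaps)—precisely the setting of this paper. Everything else is a routine interchange of the order of summation over the finite set $\mathcal{P}$.
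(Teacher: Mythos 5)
Your proof is correct, and it takes a cleaner route than the paper's. The paper establishes the identity as two separate inequalities: first a covering argument ("each vertex $v_j$ is covered at least $d_L^G(v_j)$ times by hedge vertex sets") to get $\sum_i |V(H_i)| \geq \sum_j d_L^G(v_j)$, and then a rather opaque argument about unions of incident edge sets to get the reverse inequality $\sum_i |V(H_i)| \leq \sum_j d_L^G(v_j)$. Your double count of the incidence set $\mathcal{P} = \{(v_j, H_i) : v_j \in V(H_i)\}$ subsumes both directions at once: evaluating $|\mathcal{P}|$ hedge-by-hedge gives the left side, evaluating it vertex-by-vertex gives the right side, and the key per-vertex step --- that the hedges containing $v_j$ are in bijection with the distinct labels on edges at $v_j$, because the labels partition $E$ --- is exactly the point where the paper's second inequality is vague. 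In other words, both proofs rest on the same underlying counting fact, but yours packages it as a single exact count rather than two one-sided bounds, which makes the role of the no-overlap (partition) hypothesis explicit and removes the ambiguity in the paper's ``$\leq$'' direction. This is the argument the paper appears to be reaching for, stated rigorously.
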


\begin{proof}
	Since all hedge vertices cover graph $G$ vertices, at each vertex $v_i\in V(G),$ it is covered by at least $d_L^G(v_i)$ times, thus $\sum_{i=1}^{|L|}|V(H_i)|\geq \sum_{j=1}^{n}d_L^G(v_i).$ On the other hand, if one hedge vertex $v_i\in H_j$ is contained in a subset of vertices of $V(G)$, if the subset of vertices of $V(G)$ incident to at least one edge with the same label as hedge $H_j,$ the label degree all all vertices of $G$ is no less than the union of all incident edge sets of all subsets of $V(G),$ thus $\sum_{i=1}^{|L|}|V(H_i)|\leq \sum_{j=1}^{n}d_L^G(v_i).$ Therefore, the equality holds. 
\end{proof}

\begin{theorem}
	$$\sum_{i=1}^{|L|} span(H_i) \leq 2m-n+1.$$
	$$n\delta_L(v_i)-n+1\leq \sum_{i=1}^{|L|} span(H_i) \leq n\Delta_L(v_i)-n+1 .$$
\end{theorem}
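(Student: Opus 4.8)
The plan is to reduce both chains of inequalities to a single exact identity, namely
$$\sum_{i=1}^{|L|} span(H_i) = \left(\sum_{j=1}^{n} d_L^G(v_j)\right) - n + 1,$$
and then to bound the label-degree sum in three different ways. First I would invoke Theorem \ref{ranksum}, which gives $\sum_{i} rank(H_i) = rank(G) = n-1$ (using the standing assumption that $G$ is connected, so $span(G)=1$). Writing $rank(H_i) = |V(H_i)| - span(H_i)$ and rearranging yields
$$\sum_{i=1}^{|L|} span(H_i) = \sum_{i=1}^{|L|} |V(H_i)| - (n-1).$$

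Next I would substitute Theorem \ref{vd}, which states $\sum_{i} |V(H_i)| = \sum_{j} d_L^G(v_j)$, to obtain the displayed identity above. From here the second line of bounds is immediate: reading $\delta_L(v_i)$ and $\Delta_L(v_i)$ as the minimum and maximum label degrees $\delta_L(V)$ and $\Delta_L(V)$, we have $\delta_L(V) \le d_L^G(v_j) \le \Delta_L(V)$ for every vertex $v_j$. Summing over the $n$ vertices gives $n\,\delta_L(V) \le \sum_j d_L^G(v_j) \le n\,\Delta_L(V)$, and subtracting $n-1$ produces
$$n\,\delta_L(V) - n + 1 \le \sum_{i=1}^{|L|} span(H_i) \le n\,\Delta_L(V) - n + 1.$$

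For the first inequality $\sum_i span(H_i) \le 2m - n + 1$, the key step—and the only one requiring a genuine observation rather than substitution—is the bound $\sum_j d_L^G(v_j) \le 2m$. This holds because $d_L^G(v_j)$ counts the distinct labels incident to $v_j$, which is at most the ordinary degree $\deg(v_j)$; summing and applying the handshaking identity $\sum_j \deg(v_j) = 2m$ gives the claim. Combining this estimate with the identity completes the proof.

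The main obstacle here is conceptual rather than computational: recognizing that the sum of spans can be traded, via the rank identity of Theorem \ref{ranksum}, for the label-degree sum, after which each of the three bounds follows from an elementary estimate on $\sum_j d_L^G(v_j)$. The only places where care is needed are justifying $d_L^G(v_j) \le \deg(v_j)$ (so that the handshaking bound applies) and keeping the connected-graph hypothesis $span(G)=1$ in force, since it is precisely this that pins $rank(G)$ to $n-1$ and makes the $-n+1$ term exact.
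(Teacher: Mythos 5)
Your proposal is correct and follows essentially the same route as the paper's own proof: both derive the identity $\sum_i span(H_i) = \sum_j d_L^G(v_j) - n + 1$ by combining Theorem \ref{ranksum} with Theorem \ref{vd}, then bound the label-degree sum by $2m$ for the first inequality and by $n\delta_L$ and $n\Delta_L$ for the second. Your justification of $\sum_j d_L^G(v_j) \leq 2m$ via $d_L^G(v_j) \leq \deg(v_j)$ and the handshake lemma is slightly more explicit than the paper's, which simply asserts this bound as obvious.
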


\begin{proof}
	According to Theorem \ref{ranksum}, $\sum_{i}^{|L|}(|V(H_i)|-span(H_i))=n-1,$ thus\\ $\sum_{i}^{|L|}span(H_i)= \sum_{i}^{|L|}(|V(H_i)|)-n+1.$ According to Theorem \ref{vd},\\ $\sum_{i=1}^{|L|}|V(H_i)|=\sum_{j=1}^{n}d_L^G(v_i),$ thus $\sum_{i}^{|L|}span(H_i)= \sum_{j=1}^{n}d_L^G(v_i)-n+1.$ It is obvious $\sum_{v_j\in V(G)} d_L(v_j)\leq 2\sum_{i=1}^{|L|}|H_i|=2|E|=2m,$ therefore the first inequality holds. Since $\delta_L(v_i)\leq d_L^G(v_i)\leq \Delta_L(v_i),$ thus the second inequality holds. 
\end{proof}

\begin{theorem} \label{contractv}
	Let $u$ and $v$ be two vertices of hedge graph $G$ with label degrees $d_L(u)$ and $d_L(v)$, and $e_{uv}\in E(G)$,  after contacting the edge $e_{uv}$ between them, the new vertex $w$ replacing them satisfies the following conditions: \\ 
	$\min\{d_L(u), d_L(v)\}-1 \leq \max\{d_L(u), d_L(v)\}-1\leq d_L(w)\leq d_L(u)+d_L(v)-2 \leq 2 \max\{d_L(u), d_L(v)\}-2.$ 
\end{theorem}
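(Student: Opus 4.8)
The plan is to track how the set of labels incident to a vertex changes under the contraction. I would write $A$ for the set of labels appearing on edges incident to $u$, so that $d_L(u)=|A|$, write $B$ for the set of labels incident to $v$, so that $d_L(v)=|B|$, and write $C$ for the set of labels incident to the merged vertex $w$, so that $d_L(w)=|C|$. Let $\ell$ denote the label of the contracted edge $e_{uv}$. Since $e_{uv}$ is incident to both $u$ and $v$, we have $\ell\in A$ and $\ell\in B$, hence $\ell\in A\cap B$; the fate of this label $\ell$ is central to both nontrivial inequalities.

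For the lower bound $\max\{d_L(u),d_L(v)\}-1\le d_L(w)$, I would assume without loss of generality that $d_L(u)\ge d_L(v)$ and argue that $w$ inherits every edge incident to $u$ except the contracted edge $e_{uv}$ itself. Consequently every label of $A$ other than possibly $\ell$ still appears at $w$, so $C\supseteq A\setminus\{\ell\}$ and $d_L(w)\ge |A|-1=d_L(u)-1=\max\{d_L(u),d_L(v)\}-1$. The one point to check is that the clean-up step cannot destroy a label: merging parallel edges of a common label, or merging loops of a common label, only removes duplicate copies and never removes the label itself, so the label set incident to $w$ is unchanged by clean-up.

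For the upper bound $d_L(w)\le d_L(u)+d_L(v)-2$, I would observe that every edge incident to $w$ descends from an edge incident to $u$ or to $v$, so $C\subseteq A\cup B$ and therefore $d_L(w)\le |A\cup B|=|A|+|B|-|A\cap B|\le d_L(u)+d_L(v)-1$, where the last step uses $\ell\in A\cap B$. The remaining decrement of one is meant to be supplied by the deletion of the loop created from $e_{uv}$: once its label $\ell$ no longer occurs on any surviving edge at $w$, we get $\ell\notin C$, so $C\subseteq (A\cup B)\setminus\{\ell\}$ and $d_L(w)\le |A\cup B|-1\le d_L(u)+d_L(v)-2$. The two outer inequalities are then immediate, since $\min\{d_L(u),d_L(v)\}\le \max\{d_L(u),d_L(v)\}$ gives the leftmost bound and $d_L(u)+d_L(v)=\min\{d_L(u),d_L(v)\}+\max\{d_L(u),d_L(v)\}\le 2\max\{d_L(u),d_L(v)\}$ gives the rightmost.

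The main obstacle is the upper bound, and specifically the bookkeeping around $\ell$. The subtraction of $2$ rests on $\ell$ disappearing twice over: once because it is double-counted in $|A|+|B|$, and once more because the loop from $e_{uv}$ is deleted. I expect the delicate step is justifying $\ell\notin C$, since this can fail if some other edge of the same hedge (label $\ell$) is incident to $u$ or $v$ and survives the contraction, reinstating $\ell$ at $w$. I would therefore need either to read the intended contraction convention as clearing label $\ell$ entirely in the neighborhood of $u,v$, or to make explicit the hypothesis that $e_{uv}$ is the sole occurrence of $\ell$ at these two vertices; otherwise one only obtains $d_L(w)\le d_L(u)+d_L(v)-1$, and this gap is the one part of the argument that is not purely formal set-counting.
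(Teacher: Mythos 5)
Your proposal follows the same route as the paper's own proof: both identify $d_L(w)$ with the size of the union of the two incident label sets after removing the contracted label, obtain the upper bound by subadditivity of cardinality and the lower bound from the fact that this union contains each of the two (reduced) label sets. The difference is one of care, not of method: the paper simply asserts $d_L(w)=|(L(E(u))\setminus\{\ell\})\cup(L(E(v))\setminus\{\ell\})|$, i.e.\ it takes for granted precisely the claim $\ell\notin C$ that you single out as the delicate point, and offers no justification for it.

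The concern you raise is a genuine gap --- in the paper's proof as much as in your own sketch --- and it is not hypothetical. Under the paper's stated convention (only loops carrying the contracted hedge's label are deleted; clean-up merely merges parallel edges and loops of a common label), take $G$ simple and connected with edges $uv$ and $ux$ both of label $\ell$, an edge $uy$ of label $a$, and an edge $vz$ of label $b$. Then $d_L(u)=d_L(v)=2$, but after contracting $e_{uv}$ the merged vertex $w$ still meets the surviving edge $wx$ of label $\ell$, together with labels $a$ and $b$, so $d_L(w)=3>2=d_L(u)+d_L(v)-2$, refuting the stated upper bound. So the theorem needs exactly the repair you propose: either the explicit hypothesis that $e_{uv}$ is the sole occurrence of its label at $u$ and $v$, or a reading in which the contraction clears label $\ell$ around $u,v$ entirely (as happens when a whole hedge is contracted and all its resulting loops are deleted, which is how the paper uses this result in Theorem~\ref{contracth}). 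Without one of these, only $d_L(w)\le d_L(u)+d_L(v)-1$ is provable, as you note; your write-up makes visible a flaw that the paper's proof conceals by fiat.
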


\begin{proof}
	Let $L(E(u))$ and $L(E(v))$ be the labels on edges incident with vertices $u$ and $v$. After contacting the edge $e_{uv}$ between $u$ and $v$, the labels on edges adjacent to the new vertex $w$ are the union of $L(E(v))$ and $L(E(v))$ deducing the label of the contracted edge, which number is no more than the sum of each vertex label degree minus 1. $d_L(w)= |(L(E(u))-L(e_{uv}))\cup (L(E(v))-L(e_{uv}))|\leq d_L(u)-1+d_L(v)-1= d_L(u)+d_L(v)-2. $
	
	As the number of labels in this union $(L(E(u))-L(e_{uv}))\cup (L(E(v))-L(e_{uv})$ is no less than the maximum label sets of $L(E(v))-L(e_{uv})$ and $L(E(v))-L(e_{uv})$, thus $\max\{d_L(u), d_L(v)\}-1\leq d_L(w).$ 
\end{proof}

\begin{theorem}\label{ContractMin} 
	$$\delta_{L}(V(G))-1\leq \delta_{L}(V(G\diagup H_i)).$$
\end{theorem}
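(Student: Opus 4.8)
The plan is to bound the label degree of \emph{every} vertex of $G\diagup H_i$ from below by $\delta_{L}(V(G))-1$ and then take the minimum. Recall that contracting the hedge $H_i$ collapses each component of $H_i$ to a single vertex, deletes the resulting loops that carry the label of $H_i$, but \emph{keeps} every loop whose label differs from that of $H_i$ (the clean-up step only merges parallel edges or loops sharing a label). Accordingly I would split the vertices of $G\diagup H_i$ into two groups: the vertices lying outside $V(H_i)$, which are untouched by the contraction, and the new ``merged'' vertices, one for each component of $H_i$.

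For a vertex $v\notin V(H_i)$, none of its incident edges belongs to $H_i$, so the contraction neither removes any of them nor alters their labels; the clean-up step merges only same-label parallel edges or loops and hence leaves the set of labels incident with $v$ intact. Thus $d_L^{G\diagup H_i}(v)=d_L^{G}(v)\ge \delta_{L}(V(G))>\delta_{L}(V(G))-1$, so such vertices are never the bottleneck.

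The substantive case is a merged vertex $w$ obtained from a component $C$ of $H_i$. Here I would fix any single original vertex $v_j\in C$ and track each of its $d_L^{G}(v_j)$ labels through the contraction. Writing $\ell_i$ for the label of $H_i$, a label $\ell\neq \ell_i$ incident with $v_j$ sits on an edge that either leaves $C$, in which case it survives as an ordinary edge at $w$, or stays inside $C$, in which case it becomes a loop at $w$; since $\ell\neq \ell_i$ this loop is retained by the contraction rule and still carries $\ell$. Hence every label of $v_j$ other than possibly $\ell_i$ reappears on an edge incident with $w$, giving $d_L^{G\diagup H_i}(w)\ge d_L^{G}(v_j)-1\ge \delta_{L}(V(G))-1$. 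Taking the minimum over both groups yields $\delta_{L}(V(G\diagup H_i))\ge \delta_{L}(V(G))-1$, which is the claim.

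The place where the bound can actually be tight, and therefore the step needing the most care, is this merged vertex: the argument hinges precisely on the convention that loops with labels different from $\ell_i$ are preserved (and counted toward the label degree), since it is only the single label $\ell_i$, shared by all contracted edges of $H_i$, that is guaranteed to disappear. I would also verify the degenerate case in which some component of $H_i$ is all of $V(G)$, so that $G\diagup H_i$ is a single vertex carrying only loops; the same per-label tracking applies verbatim and confirms that a loss of exactly one label is the worst that can occur.
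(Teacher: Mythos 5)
Your proof is correct, and its overall shape matches the paper's: both arguments split $V(G\diagup H_i)$ into vertices outside $V(H_i)$, whose incident label sets are untouched, and merged vertices, which lose at most one label. The difference lies in how the merged-vertex bound is justified, and here your route is genuinely distinct and stronger. The paper cases on whether the minimum-label-degree vertex lies in $V(H_i)$ and then invokes Theorem \ref{contractv} to assert that every newly created vertex has label degree at least $\delta_L(V(H_i))-1$. But Theorem \ref{contractv} concerns contracting a \emph{single} edge; iterating its lower bound $d_L(w)\geq\max\{d_L(u),d_L(v)\}-1$ over the several edges of a component of $H_i$ would, on its face, lose one label per contraction rather than one label in total, so the paper's citation does not by itself close the argument. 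Your per-label tracking supplies exactly the missing ingredient: since every contracted edge carries the same label $\ell_i$, and loops with labels other than $\ell_i$ are retained by the contraction convention, the only label that a vertex $v_j$ in a component $C$ can fail to contribute at the merged vertex $w$ is $\ell_i$, giving $d_L^{G\diagup H_i}(w)\geq d_L^{G}(v_j)-1$ directly, with no induction over edges. This makes your proof self-contained and rigorous precisely at the step where the paper's proof is loose, and it also explains why the paper's appeal to Theorem \ref{contractv} can be repaired: within a single hedge, the label set deducted in that theorem's proof is always the same singleton $\{\ell_i\}$, so the losses do not accumulate.
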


\begin{proof}
	Assume $v\in V(G)$ has the minimum label degree: $d_L(v)=\delta_L(V(G)).$ (1) If $v\notin V(H_i),$ $d_L(v) \leq \delta_L(V(H_i)),$ after contracting $H_i,$ in worst case, according to Theorem \ref{contractv}, the new created vertices have minimum label degree of $\delta_L(V(H_i))-1.$ Vertices not in $H_i$ keep their original label degrees. Since $d_L(v)\leq\delta_L(V(H_i)),$ therefore $\delta_{L}(V(G))-1\leq \delta_{L}(V(G\diagup H_i)).$
	(2)If $v\in V(H_i),$ $d_L(v)=\delta_L(V(H_i)),$ after contracting $H_i,$ according to Theorem \ref{contractv}, the new created vertices have minimum label degree of $\delta_L(V(H_i))-1=d_L(v)-1.$ Vertices not in $H_i$ keep their original label degrees. Therefore, $\delta_{L}(V(G))-1\leq \delta_{L}(G\diagup H_i).$
\end{proof}

\begin{theorem} \label{contracth}
$$\sum_{v_j\in V(G/H_i)} d_L(v_j)\leq \sum_{v_j\in V(G)} d_L(v_j)-2 rank(H_i).$$
\end{theorem}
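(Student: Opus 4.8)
The plan is to factor the contraction of the whole hedge $H_i$ into a sequence of single-edge contractions and to apply Theorem~\ref{contractv} at each step. Pick a spanning forest $F=\{e_1,\dots,e_r\}$ of the subgraph formed by the edges of $H_i$; since $H_i$ has $span(H_i)$ components, $F$ contains exactly $r=|V(H_i)|-span(H_i)=rank(H_i)$ edges. Contracting $e_1,\dots,e_r$ one at a time collapses each component of $H_i$ to a single vertex, so after deleting the resulting label-$i$ loops the vertex set and incidences coincide with those of $G/H_i$. It therefore suffices to follow how the quantity $\sum_v d_L(v)$ changes along the chain $G=G_0,\,G_1=G_0/e_1,\,\dots,\,G_r=G/H_i$.

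First I would observe that contracting a single edge $e_t=uv$ into a new vertex $w$ leaves every vertex other than $u,v$ together with its incident label set unchanged, so the only effect on the total label degree is to replace the two terms $d_L(u)+d_L(v)$ by the single term $d_L(w)$. Applying the upper bound of Theorem~\ref{contractv}, $d_L(w)\le d_L(u)+d_L(v)-2$, gives $\sum_{v\in V(G_t)}d_L(v)\le\sum_{v\in V(G_{t-1})}d_L(v)-2$ for each $t$. Telescoping over the $r$ steps then yields $\sum_{v\in V(G/H_i)}d_L(v)\le\sum_{v\in V(G)}d_L(v)-2r$, and substituting $r=rank(H_i)$ is exactly the desired inequality.

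The step I expect to be the main obstacle is precisely the per-step loss of a full $2$, i.e.\ the legitimacy of invoking $d_L(w)\le d_L(u)+d_L(v)-2$ for an \emph{internal} edge of $H_i$. That bound charges one unit of loss to each endpoint by discarding the label of the contracted edge from both incident label sets, which is valid only if label $i$ does not survive on another edge meeting $u$ or $v$. Inside a hedge this can fail: until the last forest edge of a component is contracted, label $i$ still sits on some surviving $H_i$-edge at the merged vertex, so an intermediate step may lose only $1$. Hence the crux is to verify that Theorem~\ref{contractv} can indeed be applied at every step, or, failing that, to recover the missing units by a global count. As a sanity check, rewriting both sides through Theorem~\ref{vd} as $\sum_k|V(H_k)|$ shows that the disappearance of $H_i$ alone removes $|V(H_i)|=rank(H_i)+span(H_i)$ from the sum; securing the full $2\,rank(H_i)$ thus demands that the forced mergers among the remaining hedges contribute the extra $rank(H_i)-span(H_i)$, which is where the real accounting lies.
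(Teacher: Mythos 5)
Your instinct about the obstacle is exactly right, and it is fatal: the per-step loss of $2$ cannot be justified, and in fact the theorem as stated is false. The paper's own proof is precisely the telescoping argument you outline; it contracts $H_i$ merge by merge and asserts that ``in each operation of merging two vertices, the number of labels in the new created vertex reduces by at least 2,'' thereby committing exactly the error you flag. When the contracted edge $e_{uv}$ carries label $i$ but some other surviving edge of $H_i$ still meets $u$ or $v$, label $i$ remains present at the merged vertex, so that step loses only $1$, not $2$. Your global count via Theorem \ref{vd} tells the true story: the disappearance of $H_i$ removes only $|V(H_i)|=rank(H_i)+span(H_i)$ from $\sum_{v}d_L(v)$, and nothing forces the remaining hedges to supply the missing $rank(H_i)-span(H_i)$.

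A concrete counterexample: let $H_i$ be the path $u$--$v$--$x$ with both edges labeled $i$, so $span(H_i)=1$ and $rank(H_i)=2$, and attach three further edges $uz$, $vz$, $xz$ to one extra vertex $z$, labeled $a$, $b$, $c$ respectively (all distinct from $i$). This graph is simple and connected, with $\sum_{v\in V(G)}d_L(v)=2+2+2+3=9$. In $G/H_i$ the merged vertex $w$ and $z$ are joined by three parallel edges labeled $a$, $b$, $c$ (the label-$i$ loops are deleted), giving $\sum_{v\in V(G/H_i)}d_L(v)=3+3=6$; but the theorem demands $6\leq 9-2\cdot 2=5$. The bound that your per-component accounting actually proves is $\sum_{v\in V(G/H_i)}d_L(v)\leq\sum_{v\in V(G)}d_L(v)-rank(H_i)-span(H_i)$: for each component $C$ of $H_i$, every vertex of $C$ counts label $i$ once, and these $|C|$ counts all vanish when $C$ collapses, while the remaining labels can at best coincide in the union; this is tight, as the example shows. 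The corrected bound agrees with the claimed $2\,rank(H_i)$ only when every component of $H_i$ has exactly two vertices, which is precisely the only situation in which Theorem \ref{contractv} legitimately applies to every merge.
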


\begin{proof}
	According to Theorem \ref{contractv}, after contacting one edge between $u$ and $v$, the new vertex $w$ replaced them has label degree $d_L(w)\leq d_L(u)+d_L(v)-2.$
	Apply contraction to all edges of hedge $H_i,$ for vertices not in $V(H_i),$ their label degrees keep the same, for vertices in $V(H_i),$ their total label degrees are the size of the union of labels on $E(V(H_i))$ deducing the label of $H_i,$ which is no more than the total label degrees of $V(H_i)$ deducing two times the rank of $H_i,$ since the number of vertices reduces by $rank(H_i),$ in each operation of merging two vertices, the number of labels in the new created vertex reduces by at least 2. 
\end{proof}

\begin{theorem}\label{ContractSum}
	$$\sum_{v_j\in V(G)} d_L(v_j) \leq \sum_{v_j\in V(G\diagup H_i)} d_L(v_j)+ \sum_{v_j\in V(H_i)}d_L (v_j)-span(H_i)(\delta(V(G))-1).$$
\end{theorem}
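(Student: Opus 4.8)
The plan is to compare the two total label degrees vertex-by-vertex by splitting $V(G)$ into the vertices of $H_i$ and the vertices outside $H_i$. First I would observe that contracting $H_i$ only deletes loops carrying the label $L_i$, and every edge with label $L_i$ has both endpoints in $V(H_i)$; hence no vertex $u \in V(G) \setminus V(H_i)$ loses any label from its incident edge set. The clean-up process merges parallel edges of equal label but never erases a label, so each such $u$ keeps its label degree unchanged in $G \diagup H_i$. Consequently the contributions of $V(G)\setminus V(H_i)$ are identical on both sides and cancel.

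Next I would account for the vertices of $H_i$. Contracting $H_i$ collapses each of its $span(H_i)$ components into a single vertex, producing exactly $span(H_i)$ new vertices $w_1,\dots,w_s$ (with $s=span(H_i)$) in $G\diagup H_i$, while the remaining vertices of $G\diagup H_i$ are precisely the untouched vertices of $V(G)\setminus V(H_i)$. Using the cancellation from the first step, the desired inequality is equivalent, after moving $\sum_{v_j\in V(H_i)}d_L(v_j)$ to the left and $\sum_{v_j\in V(G\diagup H_i)}d_L(v_j)$ to the right, to the single clean statement $\sum_{k=1}^{s} d_L(w_k) \geq span(H_i)\,(\delta_L(V(G))-1)$.

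Finally I would close the argument by invoking Theorem \ref{ContractMin}: since $\delta_L(V(G))-1 \leq \delta_L(V(G\diagup H_i))$, every vertex of $G\diagup H_i$ --- in particular each merged vertex $w_k$ --- has label degree at least $\delta_L(V(G))-1$. Summing this bound over the $span(H_i)$ merged vertices yields exactly $\sum_{k=1}^{s} d_L(w_k)\geq span(H_i)(\delta_L(V(G))-1)$, which is the equivalent form established above, completing the proof.

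The main obstacle I anticipate is the bookkeeping in the first two steps rather than any deep estimate: one must verify carefully that no outside vertex changes its label degree (so that those terms genuinely cancel) and that the contraction introduces exactly $span(H_i)$ new vertices, each of which is a legitimate vertex of $G\diagup H_i$ and therefore subject to the minimum-label-degree bound of Theorem \ref{ContractMin}. Once this reduction is made explicit, the remainder follows immediately by summation.
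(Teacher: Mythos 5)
Your proposal is correct and takes essentially the same route as the paper's proof: split the degree sum over $V(H_i)$ and its complement, note that vertices outside $V(H_i)$ keep their label degrees under contraction, and lower-bound the total label degree of the $span(H_i)$ merged vertices by $span(H_i)(\delta_L(V(G))-1)$ via Theorem \ref{ContractMin}. Your write-up merely makes the cancellation and the bookkeeping (why outside vertices are unaffected, and that exactly $span(H_i)$ new vertices arise) more explicit than the paper does.
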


\begin{proof}
	$\sum_{v_j\in V(G)} d_L(v_j)-\sum_{v_j\in V(H_i)}d_L (v_j)$ is the total label degrees of vertices not in $V(H_i),$ after contraction of $H_i,$ those total label degrees do not change. $\sum_{v_j\in V( G\diagup H_i)} d_L(v_j)$ are the total label degrees of all vertices in $G/H_i,$ which contain the total label degrees of vertices not in $V(H_i),$ and the the label degrees of new created vertices after contraction of $H_i.$ According to Theorem \ref{ContractMin}, the total label degrees of new created vertices after contraction of $H_i$ are no less than $span(H_i)(\delta(V(G))-1)$, the above inequality holds. 	
\end{proof}

\begin{theorem}\label{ContractAdjacency}
	$$
	d_A^{G/H_i}(H_j)=\begin{cases}
	d_A^{G}(H_j)+d_A^{G}(H_i)-|L'|+1,~if~H_i~is~adjacent~to~H_j~in~G; \\
	d_A^{G}(H_j),~otherwise.\\
	\end{cases}
	$$
	
	where $|L'|$ is either $\max_i~d_A^G(H_i)$ or $\max_i~d_A^G(H_i)+1.$
	
\end{theorem}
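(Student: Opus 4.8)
The plan is to recast everything in the \textbf{hedge adjacency graph} $G_A$ constructed in the proof of Theorem~\ref{max_label_degree}, where each hedge is a vertex and two vertices are joined exactly when the corresponding hedges share a vertex of $G$. In this language $d_A^G(H)$ is simply the $G_A$-degree of $H$, so the whole statement becomes a claim about how one vertex-contraction in $G_A$ changes the degree of a fixed vertex $H_j$. First I would make precise the translation of the operation $G\diagup H_i$ into $G_A$: deleting the vertex $H_i$, and, for each component of $H_i$, forcing the hedges that met that component into a common super-vertex and hence into mutual adjacency. Here I would lean on the clean-up process (merging parallel edges and equal-label loops) to guarantee that each resulting adjacency is counted exactly once, so that no spurious multiplicities inflate $d_A^{G/H_i}(H_j)$.

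For the \emph{otherwise} branch I would argue directly. If $H_i$ is not adjacent to $H_j$, then no vertex incident to an edge of $H_j$ can belong to $V(H_i)$, since such a vertex would itself witness an $H_i$--$H_j$ adjacency. Hence the contraction merges none of the vertices incident to $H_j$, leaves every edge incident to $H_j$ with its label intact, and deletes only $H_i$, which was not among the neighbors of $H_j$ to begin with. Therefore $d_A^{G/H_i}(H_j)=d_A^G(H_j)$, which is the second case.

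For the first case I would compute the post-contraction neighborhood of $H_j$ as $\bigl(N_{G_A}(H_j)\cup N_{G_A}(H_i)\bigr)\setminus\{H_i,H_j\}$: $H_j$ retains each former neighbor except $H_i$, and through the merged super-vertex it inherits every hedge formerly adjacent to $H_i$. By inclusion--exclusion this set has size $d_A^G(H_j)+d_A^G(H_i)-c-1$, where $c$ is the number of hedges adjacent to both $H_i$ and $H_j$ (other than $H_i,H_j$ themselves) and the extra $-1$ removes $H_i$. Matching this against the claimed formula forces the identification $c+1=|L'|-1$, so I would try to read off $c=|L'|-2$ by invoking the relabeling of Lemma~\ref{vizing}: once the contraction collapses the neighbors of $H_i$ into a clique, those hedges must all receive pairwise distinct labels, and Vizing's bound pins the number of labels to $\max_i d_A^G(H_i)$ or $\max_i d_A^G(H_i)+1$.

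The step I expect to be the main obstacle is exactly this last identification, and I would be candid that it is where the argument is most delicate. Equating the local overlap $c$ with the global relabeling count $|L'|-2$ is not a formal consequence of inclusion--exclusion; it requires tying the combinatorics at the contracted component to a chromatic quantity of the entire graph $G_A$, and it is sensitive to whether Vizing's bound is attained at $\max_i d_A^G(H_i)$ or one higher. The difficulty is sharpened when $\emph{span}(H_i)>1$, for then the contraction produces several super-vertices, the neighbors of $H_i$ split into one clique per component rather than a single clique, and $H_j$ inherits only those neighbors that met the same component it did; the naive union $N_{G_A}(H_i)\cup N_{G_A}(H_j)$ then over-counts. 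I would therefore devote most of the effort to controlling this per-component bookkeeping and to justifying why the relabeling number, rather than a purely local overlap count, is the correct correction term, checking in particular the boundary case in which $H_i$ attains the maximum adjacency degree so that the net change in $d_A(H_j)$ reduces to exactly $+1$.
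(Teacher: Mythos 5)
Your plan retraces the paper's own proof almost exactly: the same split into adjacent and non-adjacent cases, the same direct argument for the non-adjacent case, and the same strategy for the adjacent case, namely writing the new neighborhood of $H_j$ as a union of the two old neighborhoods and pinning the overlap down via the relabeling number $|L'|$ of Lemma~\ref{vizing}. The step you single out as the main obstacle---equating the number $c$ of common neighbors of $H_i$ and $H_j$ with $|L'|-2$---is precisely the step the paper disposes of in one sentence (``in the relabel process, hedges have different new labels if they are adjacent to each other, thus there are $(|L'|-2)$ number of hedges adjacent to both $H_i$ and $H_j$''), with no further justification. Your suspicion that this is a non sequitur is right: a proper relabeling of the hedge adjacency graph forces each common neighbor of $H_i$ and $H_j$ to avoid their two colors, but it does not force one common neighbor per remaining color to exist (common neighbors need not be pairwise adjacent, so they may repeat colors, and colors may simply be absent from the joint neighborhood). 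In fact the identification is false, so the gap cannot be repaired by better bookkeeping. Take a star with center $s$ and leaves $x_1,\dots,x_5$, subdivide each edge by a vertex $y_k$, give all five edges $sy_k$ the single label $Z$, give each $y_kx_k$ its own label $X_k$, and attach one extra edge $x_1z$ with label $Y$. Then $d_A^G(Y)=1$, $d_A^G(X_1)=2$, and $\max_i d_A^G(H_i)=d_A^G(Z)=5$, so for the contraction of $Y$ the claimed formula gives $d_A^{G/Y}(X_1)=2+1-|L'|+1=4-|L'|\in\{-1,-2\}$, whereas the true value is $1$. The theorem itself fails, and both your argument and the paper's stall at exactly this local-versus-global identification.

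Two smaller points. First, your inclusion--exclusion count is off by one: since $H_i\in N_{G_A}(H_j)$ and $H_j\in N_{G_A}(H_i)$, both must be removed from the union, so $\bigl|\bigl(N_{G_A}(H_i)\cup N_{G_A}(H_j)\bigr)\setminus\{H_i,H_j\}\bigr| = d_A^G(H_i)+d_A^G(H_j)-c-2$, and matching the claimed formula would force $c=|L'|-3$, not $c=|L'|-2$. The paper commits the complementary off-by-one: its count $d_A^G(H_i)-|L'|+2$ of hedges ``adjacent to $H_i$ but not $H_j$'' silently includes $H_j$ itself, which is then counted as a new neighbor of itself. Second, your worry about $span(H_i)>1$ is well founded and is ignored by the paper entirely: contraction collapses each component of $H_i$ to a \emph{separate} vertex, so $H_j$ inherits only the neighbors of those components it actually meets, and even the set identity you start from for the new neighborhood fails in general. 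Your treatment of the non-adjacent case, by contrast, is correct and is essentially the paper's.
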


\begin{proof}
	If $H_i$ is adjacent to $H_j$ in $G$, after contracting of $H_i,$ $H_i$ is deleted from the adjacency hedge list of $H_j,$ but hedges originally adjacent to $H_i$ but not $H_j$ now become adjacent to $H_j$ via the contraction operation. In the relabel process, hedges have different new labels if they are adjacent to each other, thus there are $(|L'|-2)$ number of hedges are adjacent to both $H_i$ and $H_j,$ the number of hedges adjacent to $H_i$ but not $H_j$ is $(d_A^{G}(H_i)-|L'|+2).$ Therefore, $d_A^{G/H_i}(H_j)=d_A^{G}(H_j)-1+(d_A^{G}(H_i)-|L'|+2).$
	
	According to Lemma \ref{vizing}, $|L'|$ is either $\max_i~d_A^G(H_i)$ or $\max_i~d_A^G(H_i)+1.$
	
	If $H_i$ is not adjacent to $H_j$ in $G$, after contraction of $H_i,$ $d_A^G(H_j)$ does not change. 
\end{proof}

\paragraph{\textbf{Acknowledgments}}

Rupei Xu would like to express her sincere gratitudes to Guoli Ding and Jie Han for their helpful discussions.

\bibliographystyle{plain}
\bibliography{example}

\begin{thebibliography}{1}

\bibitem{coudert2007shared}
David Coudert, Pallab Datta, St{\'e}phane P{\'e}rennes, Herv{\'e} Rivano, and
  M-E Voge.
\newblock Shared risk resource group complexity and approximability issues.
\newblock {\em Parallel Processing Letters}, 17(02):169--184, 2007.

\bibitem{coudert2016combinatorial}
David Coudert, St{\'e}phane P{\'e}rennes, Herv{\'e} Rivano, and Marie-Emilie
  Voge.
\newblock Combinatorial optimization in networks with shared risk link groups.
\newblock 2016.

\bibitem{fox2019minimum}
Kyle Fox, Debmalya Panigrahi, and Fred Zhang.
\newblock Minimum cut and minimum k-cut in hypergraphs via branching
  contractions.
\newblock In {\em Proceedings of the Thirtieth Annual ACM-SIAM Symposium on
  Discrete Algorithms}, pages 881--896. SIAM, 2019.

\bibitem{ghaffari2017random}
Mohsen Ghaffari, David~R Karger, and Debmalya Panigrahi.
\newblock Random contractions and sampling for hypergraph and hedge
  connectivity.
\newblock In {\em Proceedings of the Twenty-Eighth Annual ACM-SIAM Symposium on
  Discrete Algorithms}, pages 1101--1114. SIAM, 2017.

\bibitem{xu2018cuts}
Chao Xu.
\newblock {\em Cuts and connectivity in graphs and hypergraphs}.
\newblock PhD thesis, University of Illinois at Urbana-Champaign, 2018.

\bibitem{zhang2014efficient}
Peng Zhang.
\newblock Efficient algorithms for the label cut problems.
\newblock In {\em International Conference on Theory and Applications of Models
  of Computation}, pages 259--270. Springer, 2014.

\bibitem{zhang2011approximation}
Peng Zhang, Jin-Yi Cai, Lin-Qing Tang, and Wen-Bo Zhao.
\newblock Approximation and hardness results for label cut and related
  problems.
\newblock {\em Journal of Combinatorial Optimization}, 21(2):192--208, 2011.

\end{thebibliography}

\end{document}